\newtheorem{lemma}{Lemma}
\begin{document}

\title{Multilayer Non-Terrestrial Networks with Spectrum Access aided by Beyond-Diagonal RIS}
\author{Wali Ullah Khan, \textit{Member, IEEE,} Chandan Kumar Sheemar, \textit{Member, IEEE,} Eva Lagunas, \textit{Senior Member, IEEE,} Xingwang Li, \textit{Senior Member, IEEE,} Symeon Chatzinotas, \textit{Fellow, IEEE,}\\ Petar Popovski, \textit{Fellow, IEEE,} and Zhu Han, \textit{Fellow, IEEE} \thanks{Wali Ullah Khan, Chandan Kumar Sheemar, Eva Lagunas, and Symeon Chatzinotas are with the Interdisciplinary Centre for Security, Reliability, and Trust (SnT), University of Luxembourg, Luxembourg (emails: \{waliullah.khan, chandankumar.sheemar, eva.lagunas, symeon.chatzinotas\}@uni.lu).

Xingwang Li is with School of Physics and Electronics Information
Engineering, Henan Polytechnic University, Jiaozuo 454000, China (email:lixingwangbupt@gmail.com).

Petar Popovski is with the Department of Electronic Systems, Aalborg University, Aalborg, Denmark (email: petarp@es.aau.dk).

Zhu Han is with the Department of Electrical and Computer Engineering at the University of Houston, Houston, TX 77004 USA, and also with the Department of Computer Science and Engineering, Kyung Hee University, Seoul, South Korea, 446-701.  (e-mail: hanzhu22@gmail.com).

The earlier version of this work was accepted for presentation in the 33rd European Signal Processing Conference (EUSIPCO 2025), Palermo Italy, 8-12 September 2025 \cite{khan2025beyond}.
}
}

\markboth{IEEE Transactions (for review)}
{Shell \MakeLowercase{\textit{et al.}}: Bare Demo of IEEEtran.cls for IEEE Journals} 

% make the title areaxv
\maketitle

% in the abstract or keywords.
\begin{abstract}
%Beyond-diagonal reconfigurable intelligent surfaces (BD-RIS) provide advanced control over the wireless propagation environment, which is highly valuable for multilayer non-terrestrial networks (NTNs). In this work, we study a multiuser NTN, where a satellite serves as the primary network and a high-altitude platform station (HAPS) operates as the secondary network using the cognitive radio concept. To reduce the cost, complexity, and power consumption associated with conventional antenna arrays, a sustainable approach is considered by equipping the HAPS with a transmissive BD-RIS antenna front-end. The joint optimization problem for the BD-RIS phase response and HAPS transmit power allocation optimization under strict per-user interference temperature constraints is considered. To tackle the highly non-convex problem, we propose a novel optimization approach based on alternating optimization: the power allocation subproblem is solved in a closed form via a water-filling-type expression obtained from the Karush–Kuhn–Tucker (KKT) conditions, while the BD-RIS configuration is refined using Riemannian manifold optimization. Simulation results demonstrate that the proposed scheme significantly improves data rates and interference suppression compared to conventional diagonal RIS-assisted benchmarks, thus establishing BD-RIS as a promising enabler in future multilayer NTNs.
Sixth-generation (6G) systems are envisioned to fuse terrestrial and non-terrestrial segments to deliver truly global, resilient, and ubiquitous connectivity. Realizing this vision in multilayer non-terrestrial networks (NTNs) is impeded by spectrum scarcity, severe path loss over long links, dynamic channels across orbital/stratospheric layers, and tight energy and hardware constraints on airborne/spaceborne platforms. These challenges motivate propagation-aware solutions that actively shape interference and improve spectral–energy efficiency. Beyond-diagonal reconfigurable intelligent surfaces (BD-RIS) provide advanced control over the wireless propagation environment, which is particularly valuable for multilayer NTNs.  In this work, we study a multi-user NTN in which a satellite serves as the primary network and a high-altitude platform station (HAPS) operates as the secondary network, acting as a cognitive radio. To reduce the cost, complexity, and power consumption of conventional antenna arrays, we equip the HAPS with a transmissive BD-RIS antenna front end. We then formulate a joint optimization problem for the BD-RIS phase response and the HAPS transmit-power allocation under strict per-user interference-temperature constraints. To tackle the resulting highly nonconvex problem, we propose an alternating-optimization framework: the power-allocation subproblem admits a closed-form, water-filling–type solution derived from the Karush–Kuhn–Tucker (KKT) conditions, while the BD-RIS configuration is refined via Riemannian manifold optimization. Simulation results show significant gains in data rate and interference suppression over diagonal RIS–assisted benchmarks, establishing BD-RIS as a promising enabler for future multilayer NTNs.

\end{abstract}

\begin{IEEEkeywords}
Multilayer non-terrestrial networks, beyond diagonal RIS, cognitive radio, spectral efficiency optimization.
\end{IEEEkeywords}

%by enabling coupling between reflecting elements, thereby offering greater flexibility in beamforming and interference management than conventional RIS. This capability is especially valuable for multilayer non-terrestrial networks (NTNs), where satellite, aerial, and terrestrial segments must share spectrum efficiently without degrading overall system performance. 

% For peerreview papers, this IEEEtran command inserts a page break and
% creates the second title. It will be ignored for other modes.
\IEEEpeerreviewmaketitle

\section{Introduction}
 
The integration of non-terrestrial networks (NTNs) into sixth-generation (6G) systems aims to provide uninterrupted global coverage particularly in underserved or remote areas \cite{10716670,11029408,10841371}. Unlike traditional terrestrial networks, where communication largely relies on ground-based base stations (BSs), NTNs leverage satellites, high-altitude platform stations (HAPS), and unmanned aerial vehicles (UAVs) to extend connectivity across diverse geographic environments \cite{sheemar2025joint,iacovelli2024holographic,10938203}. 
%Such architectures enable emerging applications, including the Internet of Things (IoT), vehicular networks, and disaster recovery systems \cite{10097680}.
However, a fundamental challenge in NTNs is spectrum scarcity \cite{10097680}. Because spectrum resources are limited and long-distance wireless links suffer from significant propagation losses and dynamic channel conditions, efficiently sharing spectrum among multiple NTN layers and terrestrial systems is critical \cite{10938203}. Consider, for example, a disaster recovery scenario where a UAV must rapidly provide connectivity in a region already served by terrestrial or satellite systems. In such cases, the need for efficient spectrum sharing becomes evident. One promising approach to spectrum sharing is cognitive radio (CR) \cite{10791451}. In CR-enabled networks, a secondary network opportunistically reuses the spectrum of a primary network without degrading the service quality of primary users \cite{9514409,10013700,11010845}. This dynamic access mechanism alleviates spectrum scarcity, particularly in dense communication environments, and thus is a natural fit for NTNs.

To further enhance spectrum sharing and mitigate interference, reconfigurable intelligent surfaces (RIS) have emerged as an enabling technology \cite{10557617}. By intelligently controlling the wireless propagation environment, RIS can improve spectrum efficiency, energy efficiency, and communication reliability \cite{9913356}. Recently, beyond-diagonal RIS (BD-RIS) has been proposed as an advanced form of RIS \cite{9514409}. Unlike conventional diagonal RIS (D-RIS), where each element independently adjusts only its phase, BD-RIS employs a reconfigurable impedance network to jointly manipulate the phase-shifts and inter-element coupling. This richer reconfiguration space allows advanced functionalities such as multi-user interference suppression, spatial multiplexing, and improved beamforming gain \cite{10003076,9903905,10584518,9860805,10365519,10138693,10396846}. In terms of spectrum access in NTNs, such capabilities translate into finer-grained control of interference footprints and better spectrum utilization than conventional D-RIS designs.

BD-RIS can also be employed to develop sustainable antenna front-ends, with the motivation being threefold: 1) cost and energy efficiency, since conventional antenna arrays require many RF chains and thus entail high hardware cost and power consumption, whereas BD-RIS operates in a semi-passive manner and substantially reduces both; 2) self-interference mitigation, because reflective BD-RIS can suffer from the interaction of incident and reflected signals on the same side of the surface, while a transmissive layout places the transmitter and receiver on opposite sides and effectively removes this coupling; and 3) enhanced bandwidth and aperture efficiency, as transmissive architectures alleviate the bandwidth limitations and aperture inefficiency typical of reflective designs, yielding broader operational bandwidth and improved effective aperture with corresponding performance gains.

\subsection{Related Works}
The majority of current studies on BD-RIS focus on terrestrial communication scenarios. In the context of single-antenna systems, \cite{10623689} analyzed BD-RIS for wideband channels, deriving capacity expressions, considering channel taps in the frequency domain, and proposing an optimization algorithm that outperformed D-RIS, especially under weak LoS or no static paths. In \cite{10834443}, a BD-RIS-enabled multi-cell SISO framework jointly optimized power allocation and phase shifts via block coordinate descent, achieving higher sum rates than conventional RIS. The work in \cite{li2024beyond} modeled frequency-dependent BD-RIS in OFDM using circuit-based admittance analysis and jointly optimized configuration and power allocation, showing increasing wideband gains with higher circuit complexity.

For multi-antenna systems, \cite{10755162} compared three nearly passive architectures and showed globally passive BD-RIS offered larger rate regions under latency and reliability constraints, albeit with increased complexity. In \cite{10571253}, a DRL-based joint optimization of BS beamforming and BD-RIS matrix improved spectral efficiency over D-RIS, with a trade-off between quantization resolution and complexity. The study \cite{10319662} proposed a closed-form passive beamforming solution and a two-stage BS–BD-RIS design, yielding higher sum rates with reduced computational complexity. In \cite{khisa2024gradient}, a gradient-based meta-learning approach for SIMO-RSMA jointly optimized receive beamforming, BD-RIS matrix, and power, improving sum rate by 22.5\% over D-RIS. Furthermore, \cite{10817282,10817342} utilized BD-RIS in terahertz communication to extend the wireless coverage in indoor and outdoor environments.
%\begin{table}[!t]
%\centering
%\caption{List of Acronyms and their definitions \textcolor{red}{This table is not needed.}}
%\label{tab:acronyms}
%\begin{tabular}{|l|l|}
%\hline
%\textbf{Acronym} & \textbf{Definition} \\
%6G & Sixth-Generation \\
%AO & Alternating Optimization \\
%AOM & Alternating Optimization Method \\
%ASR & Achievable Sum Rate \\
%AWGN & Additive White Gaussian Noise \\
%BD-RIS & Beyond Diagonal Reconfigurable Intelligent Surface \\
%BS & Base Station \\
%CSI & Channel State Information \\
%D-RIS & Diagonal Reconfigurable Intelligent Surface \\
%HAPS & High-Altitude Platform Station \\
%IoT & Internet of Things \\
%KKT & Karush–Kuhn–Tucker \\
%LEO & Low Earth Orbit \\
%LoS & Line of Sight \\
%MIMO & Multiple-Input Multiple-Output \\
%NLoS & Non-Line of Sight \\
%NTN & Non-Terrestrial Network \\
%OFDM & Orthogonal Frequency-Division Multiplexing \\
%PSD & Phase Shift Design / Matrix (context-dependent) \\
%PUT & Primary User Terminal \\
%QoS & Quality of Service \\
%RE & Reconfigurable Element \\
%RF & Radio Frequency \\
%RIS & Reconfigurable Intelligent Surface \\
%RSMA & Rate-Splitting Multiple Access \\
%SE & Spectral Efficiency \\
%SDR & Semidefinite Relaxation \\
%SINR & Signal-to-Interference-plus-Noise Ratio \\
%SISO & Single-Input Single-Output \\
%SIMO & Single-Input Multiple-Output \\
%SUT & Secondary User Terminal \\
%SVD & Singular Value Decomposition \\
%UAV & Unmanned Aerial Vehicle \\
%\hline
%\end{tabular}
%\end{table}

In addition to the above studies, \cite{10308579} deployed BD-RIS on the BS side in MIMO and optimized the coefficients and power via a manifold algorithm, improving the minimum spectral efficiency. In \cite{9737373}, a low-complexity BD-RIS allowing inter-element reflection was optimized via AO and SDR, outperforming D-RIS and approaching fully connected designs. The study \cite{10694491} maximized MIMO capacity under unitary and symmetric constraints using a manifold-based iterative algorithm, achieving higher capacity than D-RIS. In \cite{10999443}, a closed-form capacity-maximizing BD-RIS matrix was derived, analyzed under high signal-to-noise ratio (SNR), and validated to outperform prior BD-RIS designs. The work \cite{10200055} used group-connected impedance networks to design scattering matrices that served all users with reduced complexity while maintaining high channel gains. In \cite{zhao2024channel}, geodesic optimization was applied to reshape singular values of MIMO channels, improving rate and power efficiency, especially at low SNR and high dimensions. Finally, \cite{10643263} introduced a multi-sector joint service BD-RIS model, showing up to 100\% rate gains with optimal performance–complexity trade-offs compared to other architectures.

Recently, a few studies have investigated the application of BD-RIS in NTNs. For example, \cite{10716670} employed BD-RIS to enhance communication for satellites by optimizing power allocation and phase shifts to improve spectral efficiency in a single-user scenario. In \cite{khan2024integration}, a transmissive BD-RIS mounted on a UAV was utilized to support multiuser multicarrier communications, with joint optimization of beamforming and power allocation to maximize data rates. Similarly, \cite{khan2025transmissive} considered a transmissive BD-RIS on a satellite platform to assist IoT devices on the ground, achieving sum-rate maximization via coordinated power and phase shift optimization. Overall, the research on BD-RIS applications in NTNs remains nascent and demands further investigation.

%%%%%%%%%%%%%%%%
\begin{figure*}[!t]
\centering
\includegraphics [width=.7\textwidth]{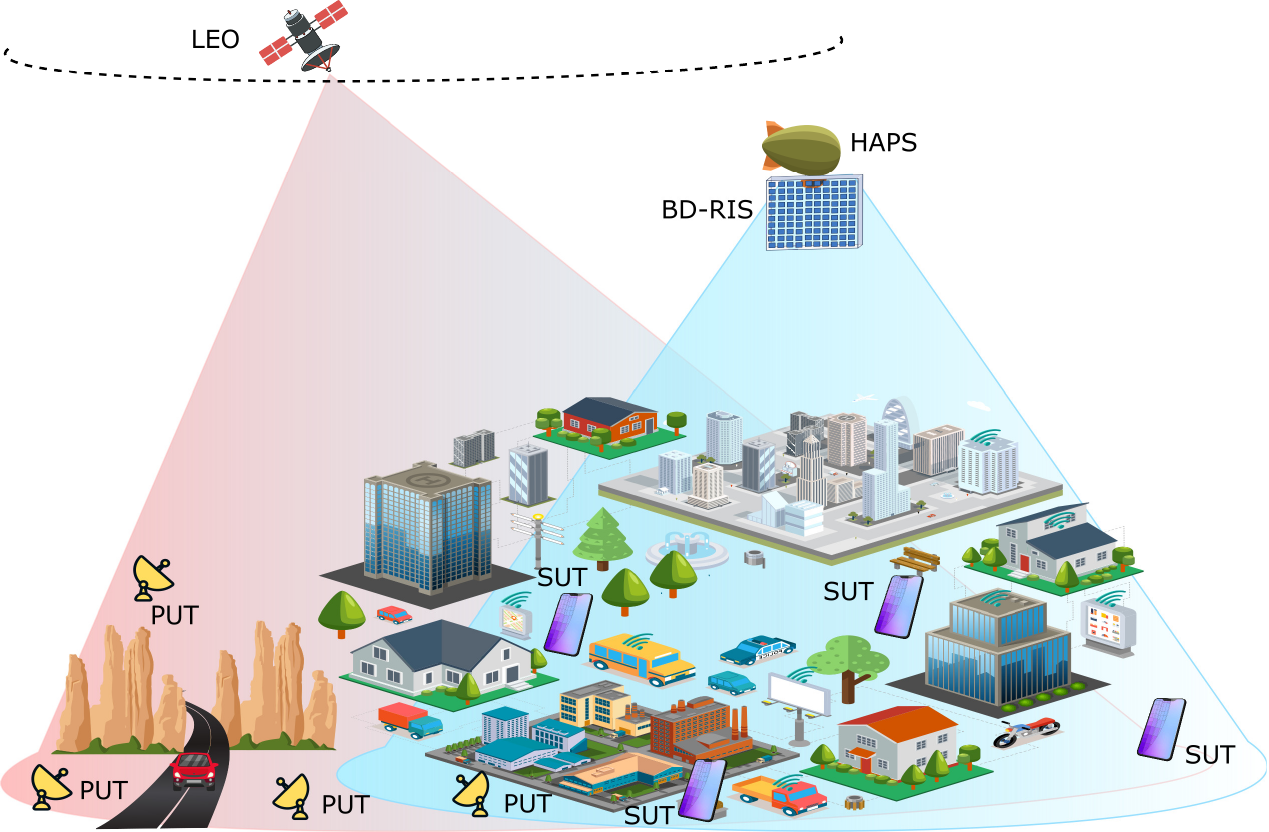}
\caption{Multilayer NTNs empowered by cognitive radio and BD-RIS.}
\label{CRNsm}
\end{figure*}
%%%%%%%%%%%%% 

\subsection{Motivation and Contributions}

%This work proposes a cognitive radio-enabled multilayer NTN framework comprising a primary satellite network and a secondary high-altitude platform station (HAPS) equipped with a BD-RIS. The objective is to jointly optimize the transmissive BD-RIS phase shifts and HAPS transmit power in the secondary network while preserving the service quality of the primary satellite network through stringent interference temperature constraints to maximize the spectral efficiency. To address the non-convexity of the considered spectral efficiency maximization problem, an alternating optimization approach is adopted: the power allocation subproblem is solved in closed form via a water-filling-type solution derived from the Karush–Kuhn–Tucker (KKT) conditions, and the BD-RIS phase configuration is refined using Riemannian manifold optimization. Numerical results demonstrate that the proposed BD-RIS-assisted scheme outperforms conventional diagonal RIS configurations in both spectral efficiency and interference suppression. 

NTNs promise to extend reliable broadband connectivity beyond the reach of terrestrial infrastructure, yet spectrum scarcity and cross-tier interference remain central obstacles, especially when multi-layer platforms such as LEO satellites and HAPS must coexist over shared bands. Motivated by this gap, we consider a CR setting in which a secondary HAPS opportunistically reuses the spectrum of a primary LEO system while rigorously protecting the primary service via an interference-temperature constraint. To unlock additional degrees of freedom for interference mitigation and rate enhancement, the HAPS is equipped with a fully connected BD-RIS, whose richer reconfiguration space, enabling joint control of the phase, and inter-element coupling, which offers capabilities unattainable with conventional multi-layer networks.

Firstly, we propose a tractable formulation of spectral-efficiency maximization under practical constraints for the secondary network. The objective couples the HAPS power variables with the BD-RIS matrix multiplicatively, rendering the problem highly non-convex and beyond the reach of standard convex solvers. To address this challenge, we decompose the task via alternating optimization. For fixed BD-RIS, we prove the objective’s concavity in the HAPS transmit powers and derive a closed-form KKT solution with a water-filling-type structure. This solution admits efficient dual updates through monotone bisection, naturally enforcing both the interference caps and the total power constraint. For fixed powers, we optimize the BD-RIS on the Stiefel manifold using a projected Riemannian gradient method with SVD-based retraction to preserve unitarity, alongside a dual ascent step that guarantees feasibility with respect to the primary network’s interference threshold. We further provide a transparent complexity characterization: the power-control stage scales linearly with the number of users, while the BD-RIS stage is dominated by the cubic SVD retraction in the RIS dimension, thereby identifying the true computational bottleneck and guiding parameter choices. Finally, Monte Carlo evaluations demonstrate rapid convergence and consistent spectral-efficiency gains over a same-size conventional D-RIS benchmark across a wide range of HAPS power budgets, interference-temperature thresholds, RIS sizes, and user densities. Collectively, these results position BD-RIS–assisted HAPS as a powerful and practical secondary tier for interference-aware spectrum sharing in multilayer NTNs.

\emph{Paper Organization:} Section \ref{sez_2} provides the multilayer NTN scenario with channel modeling and problem formulation. Section \ref{sez_3} explains a detailed alternating optimization solution, its algorithm, and complexity analysis. Section \ref{sez_4} presents and discusses numerical results and compare them with the conventional framework. Finally, Section \ref{sez_5} concludes this work.

\emph{Mathematical Notations:} In this work, the following mathematical notation is used. Scalars are denoted by lowercase or uppercase letters, while vectors and matrices are represented by bold lowercase and bold uppercase letters, respectively. The transpose, Hermitian transpose, and inverse of a matrix $\mathbf{X}$ are denoted by $\mathbf{X}^\mathrm{T}$, $\mathbf{X}^\mathrm{H}$, and $\mathbf{X}^{-1}$, respectively. Finally, $||\cdot||$ denotes the Frobenius norm.

\section{Multilayer NTNs System Model} \label{sez_2}
A multi-layer NTN-based cognitive radio network is considered, which consists of a primary network and a secondary network, as shown in Fig. \ref{CRNsm}. In the primary network, a LEO satellite communicates with primary user terminals (PUTs) using orthogonal resources. Meanwhile, a secondary network is established, where a HAPS, which is equipped with BD-RIS, transmits signals to multiple secondary user terminals (SUTs) reusing the same spectrum resources as the primary network. In this work, we assume that the signaling required to operate the transmissive BD-RIS is explicit\footnote{Designing the control procedures, protocol aspects, and signaling operations of BD-RIS plays an important role in its performance; however, this is beyond the scope of this work. A detailed discussion of the control plane and practical design guidelines can be found in \cite{10802983}.}. As we have the BD-RIS integrated at the antenna front-end of the HAPS, this control can be enabled by equipping the BD-RIS controller with a dedicated transceiver for signaling control, which can be transmitted over a dedicated wired channel from the baseband up to the analog front-end \cite{10802983}.  We assume that the sets of PUTs and SUTs are denoted $N$ and $K$, where $N=K$. Consequently, BD-RIS consists of $M$ number of reconfigurable elements (REs) that follow a fully connected configuration. Channel state information (CSI) is also assumed to be available both in the primary network and in the secondary network \cite{10097680}. The main objective of this framework is to improve the spectral efficiency (SE) of the secondary network while ensuring that the interference temperature constraint of the primary network is satisfied\footnote{In this work, it is assumed that the primary network has already been optimized prior to the optimization of the secondary network.}. The HAPS is assumed to maintain a quasi-stationary hovering position, as commonly considered in the literature \cite{9515574}. However, Doppler effects may arise due to the motion of surface user terminals (SUTs) or residual platform dynamics, especially in high-frequency bands. Therefore, we adopt a block-fading channel model where the large-scale characteristics of the wireless channel (such as amplitudes and array response) remain constant within each coherence block and change independently across blocks.

\begin{table}[!t]
\centering
\caption{Mathematical symbols and their definition.}
\label{tab:symbols}
\begin{tabular}{|l|l|}
\hline
\textrm{\normalfont Symbol} & \textrm{\normalfont Description} \\
\hline
$M$ & Number of BD-RIS elements \\  \hline
$K$ & Number of SUTs \\  \hline
$N$ & Number of PUTs \\  \hline
$t$ & Time index (fading block index) \\  \hline
$T$ & Total number of fading blocks \\  \hline
$\mathbf{h}_k(t)$ & Channel vector from HAPS to SUT $k$ at time $t$ \\  \hline
$\mathbf{g}_n(t)$ & Channel vector from HAPS to PUT $n$ at time $t$ \\  \hline
$f_k(t)$ & Channel from LEO satellite to SUT $k$ at time $t$ \\  \hline
$\boldsymbol{\Phi}(t)$ & Phase shift matrix  \\  \hline
$p_k(t)$ & Transmit power from HAPS to SUT $k$ at time $t$ \\  \hline
$q_n(t)$ & Transmit power from LEO to PUT $n$ at time $t$ \\  \hline
$x_k(t), x_n(t)$ & Transmitted data symbols from HAPS and LEO \\  \hline
$n_k(t)$ & Additive white Gaussian noise (AWGN) at SUT $k$ \\  \hline
$\sigma^2$ & Noise variance \\ \hline
$I_{\text{th}}$ & Interference temperature threshold \\  \hline
$P_t$ & Maximum total transmit power of HAPS \\ \hline
$Q_p$ & Transmit power of the primary LEO satellite \\ \hline
$R_k(t)$ & Achievable rate for SUT $k$ at time $t$ \\ \hline
$\gamma_k(t)$ & SINR for SUT $k$ at time $t$ \\ \hline
$R_{\text{sum}}(t)$ & Achievable sum rate (ASR) for all SUTs at time $t$ \\ \hline
$\beta$ & Path-loss exponent \\ \hline
$K$ (Rician) & Rician factor (distinct from number of users $K$) \\ \hline
$\hat{h}_k$ & Reference channel gain at $d_0 = 1$ m \\ \hline
$d_k$ & Distance between HAPS and SUT $k$ \\ \hline
$c$ & Speed of light \\ \hline
$f_c$ & Carrier frequency \\ \hline
$f_{D,k}$ & Doppler shift for SUT $k$ \\ \hline
$v_{\text{HAPS}}$ & Relative velocity of HAPS \\ \hline
$\theta_k, \varphi_k$ & Elevation and azimuth angles of arrival for SUT $k$ \\
%$\mathbf{a}_x, \mathbf{a}_y$ & Steering vectors along x-axis and y-axis \\
%$M_x, M_y$ & Number of RIS elements along x-axis and y-axis \\
%$q$ & Spacing between adjacent RIS elements \\
%$\delta$ & Phase shift parameter: $\delta = \frac{2\pi f_c q}{c}$ \\
%$\lambda_k(t), \nu(t)$ & Lagrange multipliers in power allocation \\
%$\mu(t)$ & Lagrange multiplier in BD-RIS phase shift problem \\
%$\eta_i(t)$ & Step size in Riemannian optimization at iteration $i$ \\
%$\alpha$ & Learning rate for dual variable update \\
%$\epsilon, \epsilon_{\text{constraint}}$ & Convergence tolerance thresholds \\
%$\|\cdot\|_F$ & Frobenius norm \\
%$\mathbf{U}, \mathbf{V}, \boldsymbol{\Sigma}$ & Matrices from SVD of $\boldsymbol{\Phi}(t)$ \\
%$a_k(t)$ & Effective channel gain: $|\mathbf{h}_k(t)\boldsymbol{\Phi}(t)|^2$ \\
%$b_k(t)$ & Noise plus interference term: $\sigma^2 + |f_k(t)|^2 q_n(t)$ \\
%$c_k(t)$ & Interference channel gain: $|\mathbf{g}_n(t)\boldsymbol{\Phi}(t)|^2$ \\
\hline
\end{tabular}
\end{table}

Let $t \in \{1, 2, \dots, T\}$ denote the time index of the fading block. Consequently, the channel vector from HAPS to SUT $k$ during block $t$ is denoted as $\mathbf{h}_k(t) \in \mathbb{C}^{M \times 1}$ and is modeled as a Rician fading channel, with Doppler-induced phase rotation included within each block. It can be described as\footnote{All mathematical symbols used in this work are listed and described in Table \ref{tab:symbols}}:
\begin{equation}
\mathbf{h}_k(t) = \sqrt{\frac{\hat{h}_k}{d_k^\beta}} \left( \sqrt{\frac{K}{K+1}} \mathbf{h}_k^{\text{LoS}}(t) + \sqrt{\frac{1}{K+1}} \mathbf{h}_k^{\text{NLoS}}(t) \right), \label{eq:h_k}
\end{equation}
where \( K \) in \eqref{eq:h_k} is the Rician factor, \( \hat{h}_k =(c/4\pi f_c d_0)^2\) denotes the reference channel gain at \(d_0=1\)m, \( d_k \) is the distance between the HAPS and SUT \( k \), and \( \beta \) is the path-loss exponent (typically ranging from 2 to 4 depending on the propagation environment).  Moreover, \(\mathbf{h}_k^{\text{LoS}}\) represents the deterministic line of sight component (LoS), and \(\mathbf{h}_k^{\text{NLoS}}\) is the scattering component, which follows a Rayleigh distribution. Following \cite{10839492}, the deterministic LoS component \( \mathbf{h}_k^{\text{LoS}}(t) \) includes a Doppler-induced phase shift which can be expressed as:
\begin{equation}
\mathbf{h}_k^{\text{LoS}}(t) = \big(\mathbf{a}_x(\theta_k, \varphi_k) \otimes \mathbf{a}_y(\theta_k, \varphi_k)\big)\cdot e^{j 2\pi f_{D,k} t T_b} \label{2}
\end{equation}
where \(T_b\) denotes the block duration, \( f_{D,k} \) is the Doppler shift experienced by SUT \( k \), and \( \theta_k(t) \) and \( \varphi_k(t) \) denote the elevation and azimuth angles of arrival, respectively, at time \( t \). Since the users are static, the angles \(\theta_k\) and \(\varphi_k\) are constant; only the Doppler phase rotates over time due to HAPS motion. The Doppler shift experienced by user \( k \) is given by:
\begin{equation}
f_{D,k} = \frac{v_{HAPS} f_c}{c} \cos(\theta_k),
\end{equation}
where \( v_{HAPS} \) is the relative velocity of HAPS, $\theta_k$ is the angle between the HAPS velocity and the LoS to user $k$, assumed constant since the users are static, \( f_c \) is the carrier frequency, and \( c \) is the speed of light. Subsequently, the steering vectors \( \mathbf{a}_x \) and \( \mathbf{a}_y \) along the x-axis and y-axis are given by:
\begin{align}
\mathbf{a}_x(\theta, \varphi) &= \left[1, e^{-j\delta \sin\theta \cos\varphi}, \dots, e^{-j\delta \sin\theta \cos\varphi (M_x - 1)} \right]^T, \\
\mathbf{a}_y(\theta, \varphi) &= \left[1, e^{-j\delta \sin\theta \sin\varphi}, \dots, e^{-j\delta \sin\theta \sin\varphi (M_y - 1)} \right]^T,
\end{align}
where \( \delta = \frac{2\pi f_c q}{c} \), with \( f_c \) being the carrier frequency, \( q \) the spacing between adjacent RIS elements (typically 
\(q\leq\lambda/2\) to avoid grating lobes), and \( c \) the speed of light. Moreover, \( M_x \) and \( M_y \) denote the number of RIS elements along the x-axis and y-axis, respectively, in the BD-RIS system.

The NLoS component \( \mathbf{h}_k^{\text{NLoS}}(t) \sim \mathcal{CN}(\mathbf{0}, \mathbf{I}_M) \) in \eqref{eq:h_k} follows a Rayleigh distribution and is generated independently for each fading block \( t \). Accordingly, the channel from the HAPS to PUT $n$ and from the LEO satellite to SUT $k$ are denoted by $\mathbf{g}_n(t) \in \mathbb{C}^{M \times 1}$ and $f_k(t) \in \mathbb{C}$, respectively. Both $\mathbf{g}_n(t)$ and $f_k(t)$ are modeled as Rician fading channels. While $\mathbf{g}_n(t)$ follows the HAPS–SUT model given in \eqref{eq:h_k} with approximately constant angles due to slow HAPS motion, $f_k(t)$ accounts for the high mobility of the LEO satellite, resulting in time-varying angles and a larger Doppler shift. For simplicity, we do not explicitly redefine the LEO channel model here, as it follows the same mathematical structure as the HAPS–SUT channel. The received signal of SUT $k$ from the HAPS at time slot \( t \) can be expressed as:
\begin{equation}
    y_k(t) = \mathbf{h}_k(t) \mathbf{\Phi}(t)\sqrt{p_k(t)} x_k(t) + f_k(t)\sqrt{q_n(t)}x_n(t) + n_{k}(t),
\end{equation}
where \( x_k(t) \) and \( x_n(t) \) are the i.i.d. unit-variance data streams transmitted from the HAPS and LEO to SUT $k$ and PUT $n$, respectively. In addition, \( \boldsymbol{\Phi}(t) \in \mathbb{C}^{M_x\times M_y} \) denotes the phase shift matrix of the transmissive BD-RIS such that \( \boldsymbol{\Phi}(t)\boldsymbol{\Phi}^H(t) = \mathbf{I}_M \), and \( n_{k}(t) \sim \mathcal{CN}(0, \sigma^2) \) is the additive white Gaussian noise (AWGN). Furthermore, \( p_k(t) \) and \( q_n(t) \) represent the transmit power of HAPS and LEO for SUT $k$ and PUT $n$, respectively. The achievable rate that SUT $k$ receives from HAPS is given as:
\begin{equation}
    R_{k}(t) = \log_2 \left( 1 + \gamma_k(t) \right),
\end{equation}
where \( \gamma_k(t) \) is the signal-to-interference-plus-noise ratio (SINR), defined as:
\begin{equation}
\gamma_k(t) = \frac{|\mathbf{h}_k(t) \mathbf{\Phi}(t)|^2 p_k(t)}{\sigma^2 + |f_k(t)|^2 q_n(t)}.\label{4}
\end{equation}
The achievable sum rate (ASR) for all users at time \( t \) is:
\begin{equation}
    R_{\text{sum}}(t) = \sum\limits_{k=1}^K \log_2 \left( 1 + \gamma_k(t) \right).
\end{equation}
To ensure the QoS of the primary LEO network, the interference power received at the PUTs must not exceed a predefined interference temperature threshold \( I_{\text{th}} \), enforced by a constraint such as:
\begin{equation}
    |\mathbf{g}_n(t) \mathbf{\Phi}(t)|^2 p_k(t) \leq I_{\text{th}},\quad \forall k.
\end{equation}

We aimto maximize the SAR of the secondary network by joint optimization of the power allocation for the HAPS and the phase shift design for the BD-RIS. The optimization problem can be formulated as \( (\mathcal{P}) \):
\begin{align}
 (\mathcal{P}) \quad  & \max_{p_k(t), \mathbf{\Phi}(t)}  \quad R_{\text{sum}}(t) \label{obj}\\
    \text{s.t.} \quad & |\mathbf{g}_n(t) \mathbf{\Phi}(t)|^2 p_k(t) \leq I_{\text{th}},\ \forall k, \label{C1}\\
    & \sum\limits_{k=1}^K p_k(t) \leq P_t, \label{C2} \\
    & \mathbf{\Phi}(t) \mathbf{\Phi}^H(t) = \mathbf{I}_M. \label{C3}
\end{align}
here, \eqref{C1} ensures the QoS of the LEO network by limiting the interference temperature from the HAPS; \eqref{C2} enforces the total HAPS power constraint; and \eqref{C3} imposes the unitary constraint on the BD-RIS phase response. Note that the formulated problem is highly non-convex due to the multiplicative coupling between the transmit power \( p_k(t) \) and the phase shift matrix \( \boldsymbol{\Phi}(t) \). As a result, obtaining a globally optimal solution is extremely challenging. Moreover, the lack of existing studies addressing similar configurations in the literature further motivates the development of novel and scenario-specific algorithmic solutions.

\section{Spectral Efficient Solution}  \label{sez_3}
To address the optimization problem in (\ref{obj}), we adopt an alternating optimization method (AOM). Specifically, the original problem is decomposed into two tractable subproblems: optimizing the transmit power of the HAPS for the $K$ SUTs and configuring the PSD matrix for the BD-RIS. 
For the power allocation task, a closed-form solution with a water-filling-type structure is derived to effectively enhance the system objective. Following that, the PSD of the BD-RIS is optimized using a semidefinite relaxation (SDR) technique to handle the non-convexity of the phase optimization problem. Next, we provide the two-stage optimization.

\subsection{HAPS Power Control}
Given a fixed PSD of BD-RIS at time slot \( t \), problem ($\mathcal{P}$) simplifies to a HAPS power control problem, reformulated as ($\mathcal{P}_1$):
\begin{align} \label{ref_power}
 (\mathcal{P}_1) \quad & \max_{p_{k}(t)} \quad R_{\text{sum}}(t) \\
    \text{s.t.} \quad & |\mathbf{g}_n(t) \mathbf{\Phi}(t)|^2 p_{k}(t) \leq I_{\text{th}}, \quad \forall k,\\
 &  \sum\limits_{k=1}^K p_k(t) \leq P_{t}.
\end{align}
Note that the optimization problem $ (\mathcal{P}_1)$ remains challenging due to the interference constraint and the coupled power variables. To facilitate tractable analysis and solution design, we first establish a key property of the objective function with respect to the transmit power by stating the following lemma.

\begin{lemma}
Given that the PSD of BD-RIS is fixed from the previous iteration at time \( t \), the objective function in ($\mathcal{P}_1$) is concave with respect to the HAPS transmit power for $K$ SUTs.
\end{lemma}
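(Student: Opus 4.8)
The plan is to exploit the fact that freezing $\boldsymbol{\Phi}(t)$ collapses every channel-dependent quantity in the SINR into a constant, leaving each per-user rate as a concave function of its own power alone. First I would fix the block index $t$ and the matrix $\boldsymbol{\Phi}(t)$, and introduce the shorthand constants $a_k := |\mathbf{h}_k(t)\boldsymbol{\Phi}(t)|^2 \ge 0$ for the effective forward gain and $b_k := \sigma^2 + |f_k(t)|^2 q_n(t) > 0$ for the interference-plus-noise floor. The essential observation is that $b_k$ carries no dependence on the HAPS power vector: $q_n(t)$ is the LEO transmit power, held fixed because the primary network is optimized beforehand, while $|f_k(t)|^2$ is a fixed channel realization within the coherence block. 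With these substitutions the SINR in \eqref{4} becomes affine in $p_k(t)$, namely $\gamma_k(t) = (a_k/b_k)\,p_k(t) =: c_k\, p_k(t)$ with $c_k \ge 0$.

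The second ingredient I would highlight is the separable structure of the objective. Inspecting the received-signal model, SUT $k$ sees only its own HAPS stream plus the LEO interference and noise, so there is no inter-SUT interference term; consequently $\gamma_k(t)$ depends on $p_k(t)$ alone. Each summand of $R_{\text{sum}}(t)$ therefore reduces to $R_k(t) = \log_2\!\big(1 + c_k\, p_k(t)\big)$, a logarithm composed with an affine map of a single variable. Since $\log_2(\cdot)$ is concave and affine pre-composition preserves concavity, each $R_k(t)$ is concave in $p_k(t)$, and the sum of concave functions is concave.

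To make this quantitative I would compute the Hessian of $R_{\text{sum}}(t)$ with respect to $\mathbf{p}(t) = [p_1(t),\dots,p_K(t)]^{\mathrm T}$. Because the rates decouple across users, all mixed partials vanish and the Hessian is diagonal, with entries $\partial^2 R_k / \partial p_k^2 = -c_k^2/\big[(\ln 2)(1 + c_k p_k(t))^2\big] \le 0$. Hence the Hessian is negative semidefinite over the feasible region, which establishes concavity; it is strictly negative definite, and the objective strictly concave, whenever every effective gain satisfies $c_k > 0$, i.e.\ whenever no SUT is fully blocked by the current $\boldsymbol{\Phi}(t)$.

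There is no deep analytical obstacle here — the claim follows the moment the SINR is recognized as affine in $p_k(t)$. The two points I would state carefully, since the whole argument rests on them, are (i) the independence of the denominator $b_k$ from the optimization variables, and (ii) the absence of inter-user interference; it is precisely this separability that keeps each log-of-SINR term concave, for were cross-user interference present the per-user rates would cease to be concave and the diagonal-Hessian argument would break down.
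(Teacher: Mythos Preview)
Your proposal is correct and follows essentially the same route as the paper: both arguments fix $\boldsymbol{\Phi}(t)$, recognize that the SINR becomes affine in $p_k(t)$, and verify concavity by computing the second derivative of each per-user log term and observing it is nonpositive. You are slightly more explicit than the paper in two respects---you invoke the composition rule (concave $\log$ composed with an affine map) as an alternative justification, and you spell out that the separable structure makes the Hessian diagonal so that the per-user second-derivative computation suffices for the full multivariate claim---but these are refinements of the same idea rather than a different approach.
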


\begin{proof}
To prove this result, we analyze the objective function from ($\mathcal{P}_1$), which can be stated as
\begin{equation}
    f(p_k(t)) = \log_2 \left( 1 + \frac{|\mathbf{h}_k(t) \mathbf{\Phi}(t)|^2 p_k(t)}{\sigma^2 + |f_k(t)|^2 q_n(t)} \right),\quad \forall k.
\end{equation}
To prove concavity, we compute the first and second derivatives of \( f(p_k(t)) \) with respect to the power variable. Applying the chain rule, the first derivative is:
\begin{align}
    \frac{\partial f}{\partial p_k(t)} 
    &= \frac{1}{\ln(2)} \cdot \frac{|\mathbf{h}_k(t) \mathbf{\Phi}(t)|^2}{|\mathbf{h}_k(t) \mathbf{\Phi}(t)|^2 p_k(t) + \sigma^2 + |f_k(t)|^2 q_n(t)}.\label{19}
\end{align}
Then, the second derivative $\frac{\partial^2 f}{\partial p_k(t)^2}$ of \eqref{19} becomes:
\begin{align}
= -\frac{|\mathbf{h}_k(t) \mathbf{\Phi}(t)|^4}{\ln(2) \left( |\mathbf{h}_k(t) \mathbf{\Phi}(t)|^2 p_k(t) + \sigma^2 + |f_k(t)|^2 q_n(t) \right)^2}.
\end{align}
Since all terms are strictly positive, it follows that
\begin{equation}
    \frac{\partial^2 f}{\partial p_k(t)^2} \leq 0, \quad \forall p_k(t) \geq 0,
\end{equation}
which confirms that \( f(p_k(t)) \) is concave with respect to \( p_k(t) \), completing the proof.
\end{proof}
Given the concavity of the objective function, the optimal transmit power allocation for each SUT at time slot \( t \) can be efficiently derived using the KKT conditions. The Lagrangian function of problem $(\mathcal{P}_1)$ can be described as \eqref{eq:Lagrangian_full} on the top of the next page. Next, we compute the partial derivations of \eqref{eq:Lagrangian_full} and set it to zero, as expressed in \eqref{eq:Lagrangian_full2} on the top of the next page.
\begin{figure*}[!t]
\begin{align}
\mathcal{L}\big(\{p_k(t)\},\{\lambda_k(t)\},\nu(t)\big)
 & = \sum\limits_{k=1}^{K} \log_2\!\left( 1 + 
\frac{\big|\mathbf{h}_k(t)\,\mathbf{\Phi}(t)\big|^{2}\,p_k(t)}
{\sigma^{2} + \big|f_k(t)\big|^{2} q_n(t)} \right) 
-\sum\limits_{k=1}^{K} \lambda_k(t)\!\left(\big|\mathbf{g}_n(t)\,\mathbf{\Phi}(t)\big|^{2}\,p_k(t) - I_{\text{th}}\right)\nonumber\\&
- \nu(t)\!\left(\sum_{k=1}^{K} p_k(t) - P_t\right). 
\label{eq:Lagrangian_full}
\end{align}
\end{figure*}
\begin{figure*}[!t]
\begin{align}
\frac{\partial \mathcal{L}\big(\{p_k(t)\},\{\lambda_k(t)\},\nu(t)\big)}{\partial p_k}
 & = \frac{\partial}{\partial p_k}\Bigg[\sum\limits_{k=1}^{K} \log_2\!\left( 1 + 
\frac{\big|\mathbf{h}_k(t)\,\mathbf{\Phi}(t)\big|^{2}\,p_k(t)}
{\sigma^{2} + \big|f_k(t)\big|^{2} q_n(t)} \right) 
-\sum\limits_{k=1}^{K} \lambda_k(t)\!\left(\big|\mathbf{g}_n(t)\,\mathbf{\Phi}(t)\big|^{2}\,p_k(t) - I_{\text{th}}\right)\nonumber\\&
- \nu(t)\!\left(\sum_{k=1}^{K} p_k(t) - P_t\right)\Bigg]=0. 
\label{eq:Lagrangian_full2}
\end{align}
\noindent\rule{\textwidth}{0.4pt}
\end{figure*}
After straightforward calculation, we obtain the following closed-form solution with a water-filling-type structure for the transmit power \( p_k^*(t) \) of SUT \( k \) such as:
\begin{equation}  
    p_k^*(t) = \max \Bigg\{ 0,\frac{1}{\ln(2)\left( \lambda_k(t) c_k(t) + \nu(t) \right)} - \frac{b_k(t)}{a_k(t)} \Bigg\}, \label{p}
\end{equation}
where \( a_k(t) \), \( b_k(t) \), and \( c_k(t) \) in \eqref{p} are defined as:
\begin{equation}
a_k(t) = |\mathbf{h}_k(t) \mathbf{\Phi}(t)|^2.
\end{equation}
\begin{equation}
b_k(t) = \sigma^2 + |f_k(t)|^2 q_n(t).
\end{equation}
\begin{equation}
c_k(t) = |\mathbf{g}_n(t)\mathbf{\Phi}(t)|^2.
\end{equation}

Moreover, the term $1/\ln(2) \left(\lambda_k(t) c_k(t) + \nu(t) \right)$ in \eqref{p} can be interpreted as the ``water level,'' determined by both the interference constraint through the Lagrange multiplier \(\lambda_k(t)\) and the total power constraint through \(\nu(t)\). Accordingly, the term $b_k(t)/a_k(t)$ represents the ``noise floor'' or the inverse effective channel gain of SUT \(k\), which includes the noise variance and the interference from the PUT signal. Power is allocated only to SUT whose channel quality allows the water level to exceed this noise floor, ensuring an efficient use of transmit power while satisfying both interference and power constraints. A mathematical justification for this result is provided in Appendix~A.
For each user \(k\), perform an \emph{inner} bisection on \(\lambda_k\) only if the unconstrained allocation \(p_k(0,\nu)\) exceeds the interference cap \(I_{\text{th}}/c_k\); in that case set bounds \([0,\Lambda_k^{\max}]\) (grow \(\Lambda_k^{\max}\) by doubling until \(p_k(\Lambda_k^{\max},\nu)\le I_{\text{th}}/c_k\)) and solve the monotone scalar equation \(p_k(\lambda_k,\nu)=I_{\text{th}}/c_k\) to tolerance \(\epsilon_\lambda\), which converges in \(\mathcal{O}(\log(1/\epsilon_\lambda))\). Then, in the outer loop, we find $\nu^\star$ by solving 
\begin{equation}
P(\nu) = \sum_{k=1}^K p_k(\lambda_k^\star(\nu),\nu) = P_t,
\label{eq:nu_bisect}
\end{equation} 
if the total power constraint is active or
\begin{equation}
P(\nu) = \sum_{k=1}^K p_k(\lambda_k^\star(\nu),\nu) \leq P_t,
\label{eq:nu_bisect}
\end{equation} 
when the total power constraint is inactive. Note that
$P(\nu)$ is non-increasing in \(\nu\); run an \emph{outer} bisection on \(\nu\) with \(\nu_{\min}=0\), growing \(\nu_{\max}\) by doubling until \(P(\nu_{\max})\le P_t\), then bisect to tolerance \(\epsilon_\nu\) to find \(\nu^\star\) such that \(P(\nu^\star)=P_t\) (or \(\le P_t\) if inactive). Stopping: terminate when all active per-user interference equalities and the sum-power equality are within tolerance, e.g., \(\big|p_k-I_{\text{th}}/c_k\big|\le \epsilon_\lambda\) for active users and \(\big|\sum_k p_k-P_t\big|\le \epsilon_\nu\). When the per-user interference constraint is \emph{known active}, you can directly set the dual variable as:
\begin{equation}
    \lambda_k(t) = \frac{|\mathbf{g}_n(t) \mathbf{\Phi}(t)|^2}{I_{\text{th}}}.
\end{equation}
By substituting this value into \eqref{p}, the optimal transmit power for each SUT \( k \) at time \( t \) can be efficiently computed, ensuring that the interference threshold \( I_{\text{th}} \) is not violated.

\subsection{BD-RIS Optimization}
Subsequently, given the optimal power allocation vector \(\mathbf{p}^*(t) = \{p_1^*(t), p_2^*(t), \dots, p_K^*(t)\}
\), we optimize the BD-RIS phase shift matrix \(\mathbf{\Phi}(t)\). The optimization problem is formulated as

\begin{align}
 (\mathcal{P}_2) \quad  \max_{\mathbf{\Phi}(t)} \quad & \sum_{k=1}^{K} \log_2 \left( 1 + \frac{|\mathbf{h}_k(t) \mathbf{\Phi}(t)|^2 p_k^*(t)}{\sigma^2 + |f_k(t)|^2 q_n(t)} \right) \\
    \text{s.t.} \quad & |\mathbf{g}_n(t) \mathbf{\Phi}(t)|^2 p_k^*(t) \leq I_{\text{th}},\quad \forall k, \\
    & \mathbf{\Phi}(t) \mathbf{\Phi}^H(t) = \mathbf{I}_M.\label{26}
\end{align}
Since the unitary constraint \eqref{26} is non-convex, we employ Riemannian optimization to iteratively update \(\mathbf{\Phi}(t)\) on the Stiefel manifold. To ensure feasibility, we introduce the Lagrange multiplier \(\mu(t)\) to handle interference constraints. It can be stated as follows.
\begin{equation}
    \mathcal{L}(.) = f(\mathbf{\Phi}(t)) - \mu(t) \sum_{k=1}^K \left(|\mathbf{g}_n(t) \mathbf{\Phi}(t)|^2 p_k^*(t) - I_{\text{th}}\right),
\end{equation}
where $\mathcal{L}(.)=\mathcal{L}(\mathbf{\Phi}(t), \mu(t))$ and $f(\mathbf{\Phi}(t))$ is given as:
\[
f(\mathbf{\Phi}(t)) = \sum_{k=1}^{K} \log_2 \left( 1 + \frac{|\mathbf{h}_k(t) \mathbf{\Phi}(t)|^2 p_k^*(t)}{\sigma^2 + |f_k(t)|^2 q_n(t)} \right).
\]
Since \(\mathbf{\Phi}(t)\) lies on the Stiefel manifold, we project the Euclidean gradient onto the tangent space. The Euclidean gradient is given by \eqref{fig:placeholder} on the top of the next page.
\begin{figure*}[!t]
    \centering
\begin{equation}
    \nabla_{\mathbf{\Phi}(t)} \mathcal{L} = \sum_{k=1}^{K} \frac{2 p_k^*(t)}{\ln(2)(\sigma^2 + |f_k(t)|^2 q_n(t) + |\mathbf{h}_k(t) \mathbf{\Phi}(t)|^2 p_k^*(t))} \cdot \mathbf{h}_k^H(t) \mathbf{h}_k(t) \mathbf{\Phi}(t) - 2 \mu(t) \sum_{k=1}^{K} p_k^*(t) \mathbf{g}_n^H(t) \mathbf{g}_n(t) \mathbf{\Phi}(t). \label{fig:placeholder}
\end{equation}
\end{figure*}

The Riemannian gradient is then obtained by projecting the Euclidean gradient onto the tangent space, as shown in \eqref{40} on the next page.
\begin{figure*}[!t]
    \centering
\begin{equation}
    \text{Proj}_{\mathbf{\Phi}(t)}(\nabla_{\mathbf{\Phi}(t)} \mathcal{L}) = \nabla_{\mathbf{\Phi}(t)} \mathcal{L} - \mathbf{\Phi}(t) \left(\mathbf{\Phi}^H(t) \nabla_{\mathbf{\Phi}(t)} \mathcal{L} + \nabla_{\mathbf{\Phi}(t)} \mathcal{L}^H \mathbf{\Phi}(t) \right)/2.\label{40}
\end{equation}
\noindent\rule{\textwidth}{0.4pt}
\end{figure*}
Then, the phase shift matrix is updated as:
\begin{equation}
    \mathbf{\Phi}_{i+1}(t) = \mathbf{\Phi}_i(t) - \eta_i(t) \cdot \text{Proj}_{\mathbf{\Phi}(t)}(\nabla_{\mathbf{\Phi}(t)} \mathcal{L}),
\end{equation}
where \(\eta_i(t)\) is the step size in iteration \(i\). The dual variable \(\mu(t)\) is updated as:
\begin{algorithm}[!t]
\caption{Alternating Optimization for Power Allocation and BD-RIS Phase Shift Design}
\label{alg:AO_RIS_Power}
\begin{algorithmic}[1]
\REQUIRE All channel vectors, noise variance, maximum transmit power of HAPS, transmit power of satellite, interference temperature threshold, number of users, initial phase shift matrix, and maximum iterations.
\ENSURE Optimal power allocation for HAPS $\mathbf{P}^*(t)=[p^*_1,p^*_2,\dots,p^*_K]$, and optimal BD-RIS phase shifts design $\boldsymbol{\Theta}^*$ such that $\boldsymbol{\Theta}(t)\boldsymbol{\Theta}^H(t)=\mathbf{I}_M$.

\STATE \textbf{Initialize}
\STATE Random phase shifts $\boldsymbol{\Theta}^{(0)}$
\STATE Initialize dual variable $\mu^{(0)} \ge 0$
\STATE Set $i = 0$, convergence = \texttt{False}

\WHILE{$i < I_{\text{max}}$ AND convergence = \texttt{False}}
    \STATE \textbf{Step 1: Power allocation for HAPS (Closed-form solution via KKT with water-filling like structure)}
    \STATE Given $\boldsymbol{\Theta}^{(i)}$, solve for $\mathbf{p}^{(i+1)}$ as
    \begin{equation*}
        p_k^{(i+1)} = \left(\frac{1}{\ln(2)\left( \lambda_k(t) c_k(t) + \nu(t) \right)} - \frac{b_k(t)}{a_k(t)}\right)^+ 
    \end{equation*}
    \STATE where $\lambda_k$ ensures $|\mathbf{g}_n(t) \mathbf{\Phi}(t)|^2 p_k(t) \leq I_{\text{th}}$ and $\nu$ guarantees $\sum_k p_k^{(i+1)} \leq P_{t}$.

    \STATE \textbf{Step 2: BD-RIS Phase Shift Design (Stiefel manifold optimization with interference constraints)}
    \STATE Given $\mathbf{p}^{(i+1)}$, form the augmented objective as
    \begin{align*}\small
        \mathcal{L}(.) = f(\mathbf{p}^{(i+1)}, \boldsymbol{\Theta}) - \mu^{(i)} \sum_{k=1}^K \big( |\mathbf{g}_n(t)\boldsymbol{\Theta}|^2 p_k^{(i+1)} - I_{\text{th}} \big)
    \end{align*}
    \STATE where $\mathcal{L}(.)=\mathcal{L}(\boldsymbol{\Theta}, \mu^{(i)})$.
    \STATE Compute the Riemannian gradient of $\mathcal{L}$ on the Stiefel manifold by \eqref{fig:placeholder}
    \STATE Update $\boldsymbol{\Theta}$ via a gradient step and project onto the tangent space by \eqref{40}.
    \STATE Update the phase shift matrix $\boldsymbol{\Theta}$ as
    \begin{equation*}
    \mathbf{\Phi}_{i+1}(t) = \mathbf{\Phi}_i(t) - \eta_i(t) \cdot \text{Proj}_{\mathbf{\Phi}(t)}(\nabla_{\mathbf{\Phi}(t)} \mathcal{L}),
\end{equation*}
    \STATE \textbf{Retract} to the Stiefel manifold via SVD: if $\mathbf{Z} = \mathbf{U}\mathbf{\Sigma}\mathbf{V}^H$ then $\boldsymbol{\Theta}^{(i+1)} \gets \mathbf{U}\mathbf{V}^H$.
    \STATE Update dual variable by \eqref{31}
    \STATE \textbf{Check Convergence}
    \STATE If $\|\mathbf{p}^{(i+1)} - \mathbf{p}^{(i)}\|_2^2 + \|\boldsymbol{\Theta}^{(i+1)} - \boldsymbol{\Theta}^{(i)}\|_F^2 < \epsilon$ and \eqref{con} satisfy, then convergence = \texttt{True}
    \STATE Otherwise, set $i \leftarrow i + 1$
\ENDWHILE

\STATE \textbf{Return} $\mathbf{p}^* = \mathbf{p}^{(i+1)}$, $\boldsymbol{\Theta}^* = \boldsymbol{\Theta}^{(i+1)}$.
\end{algorithmic}
\end{algorithm}
\begin{equation}
\begin{split}
    \mu_{i+1}(t) = \max\Bigg(&0, \mu_i(t) + \alpha \bigg( \\
    & \sum_{k=1}^{K} |\mathbf{g}_n(t) \mathbf{\Phi}_{i+1}(t)|^2 p_k^*(t) - I_{\text{th}} \bigg) \Bigg), \label{31}
\end{split}
\end{equation}
where \(\alpha\) is the learning rate for the dual update. To ensure that the updated \(\mathbf{\Phi}_{i+1}(t)\) remains unitary, we enforce the constraint through the singular value decomposition (SVD) method as follows:
\begin{equation}
    \mathbf{\Phi}_{i+1}(t) = \mathbf{U} \mathbf{V}^H, \quad \text{where } \mathbf{U} \mathbf{\Sigma} \mathbf{V}^H = \text{SVD}(\mathbf{\Phi}_{i+1}(t)).\label{43}
\end{equation}
here in \eqref{43}, $\mathbf{U}$ and $\mathbf{V}$ are unitary matrices, and $\mathbf{\Sigma}$ is a diagonal matrix containing the singular values of $\mathbf{\Phi}_{i+1}(t)$. By reconstructing the matrix as $\mathbf{U} \mathbf{V}^H$, we discard the singular values and project the solution back onto the Stiefel manifold, thereby ensuring that the unitary constraint $\mathbf{\Phi}_{i+1}(t)\mathbf{\Phi}_{i+1}^H(t) = \mathbf{I}_M$ is satisfied. The algorithm continues until convergence is achieved, measured by the thresholds \(\epsilon\) and \(\epsilon_{\text{constraint}}\):
\begin{align}
    \|\mathbf{\Phi}_{i+1}(t) - \mathbf{\Phi}_i(t)\|_F &< \epsilon, \\
    \left|\sum_{k=1}^{K} |\mathbf{g}_n(t)\mathbf{\Phi}_{i+1}(t)|^2 p_k^*(t) - I_{\text{th}} \right| &< \epsilon_{\text{constraint}},\label{con}
\end{align}
where \(\|\cdot\|_F\) denotes the Frobenius norm, which measures the update difference between consecutive BD-RIS matrices \(\mathbf{\Phi}_{i+1}(t)\) and \(\mathbf{\Phi}_i(t)\). The first condition checks for convergence in the optimization variable, while the second ensures the interference constraint is met within a small tolerance. Once both conditions are satisfied, the optimized phase shift matrix \(\mathbf{\Phi}^*(t) = \mathbf{\Phi}_{i+1}(t)\) is passed to the next step of the AO process.

\subsection{Proposed Algorithm and Complexity Analysis}
\subsection*{Proposed Algorithm}
The joint power allocation and BD-RIS phase shift design problem is solved via alternating optimization, where the original problem is first decoupled into two subproblems. Then, a closed-form solution with a water-filling type structure is derived using KKT conditions for HAPS power allocation, given the random BD-RIS phase shift design. Furthermore, the phase shift design for BD-RIS is optimized via a Riemannian approach using the fixed power allocation of HAPS. The Algorithm~\ref{alg:AO_RIS_Power} outlines the complete procedure of our two-step alternating optimization. 

\subsection*{Complexity Analysis}
Given the BD-RIS phase shift matrix from the previous iteration, the power allocation subproblem (P1) is convex and admits a closed-form solution via the KKT conditions, resulting in the water-filling-type expression in (22). The main computational steps involve evaluating the channel gains $a_k(t)$, $b_k(t)$, and $c_k(t)$ for all $K$ users, which requires $\mathcal{O}(KM)$ operations due to the matrix–vector multiplications involving the $M$-element BD-RIS. The dual variables $\{\lambda_k(t)\}$ and $\nu(t)$ are computed using simple bisection searches or closed-form expressions (e.g., (23)), with complexity $\mathcal{O}(K\log(1/\epsilon_b))$, where $\epsilon_b$ is the bisection tolerance. Consequently, the overall complexity of the power allocation step per alternating optimization iteration is $\mathcal{O}(KM + K\log(1/\epsilon_b))$, which scales linearly with both the number of users $K$ and the number of BD-RIS elements $M$.

For the BD-RIS optimization subproblem (P2), the Stiefel manifold-based Riemannian optimization involves several key operations per inner iteration: computing the Euclidean gradient (28), projecting it onto the tangent space (29), performing a gradient descent step (30), retraction via SVD (32), and updating the dual variable $\mu(t)$ (31) for the interference constraints. The Euclidean gradient calculation involves summations over $K$ users, each requiring $\mathcal{O}(M^2)$ operations due to matrix–vector multiplications, leading to a cost of $\mathcal{O}(KM^2)$. The tangent space projection adds $\mathcal{O}(M^2)$ operations, while the SVD-based retraction of an $M\times M$ matrix dominates with $\mathcal{O}(M^3)$ complexity. If $I_{\mathrm{RIEM}}$ denotes the number of Riemannian iterations per alternating optimization step, the total complexity of the BD-RIS optimization is $\mathcal{O}(I_{\mathrm{RIEM}}(KM^2 + M^3))$.

Let $I_{\mathrm{AO}}$ be the number of alternating optimization iterations until convergence. At each alternating optimization iteration, the total cost is the sum of the power allocation complexity and the BD-RIS optimization complexity. Therefore, the overall computational complexity is
\begin{equation*}
    \mathcal{O}\big(I_{\mathrm{AO}} \big[ KM + K\log(1/\epsilon_b) + I_{\mathrm{RIEM}}(KM^2 + M^3) \big] \big).
\end{equation*}
The cubic term $M^3$ from the SVD retraction typically dominates when $M$ is large, making the BD-RIS optimization the main computational bottleneck. However, since $I_{\mathrm{AO}}$ and $I_{\mathrm{RIEM}}$ are moderate in practice due to the fast convergence properties of both subproblems, the proposed alternating optimization framework remains computationally efficient for practical system sizes, especially compared to exhaustive joint optimization over both power and phase variables.
\begin{table}[!t]
\centering
\caption{Simulation Parameters}
\label{tab:sim_params}
\begin{tabular}{|l|l|}
\hline
\textbf{Parameter} & \textbf{Value} \\ \hline
Monte Carlo trials & 1000 \\ \hline
Number of SUTs ($K$) & 4 \\ \hline
Number of secondary users ($K$) & 10 \\ \hline
BD-RIS dimension ($M$) & 64 \\ \hline
HAPS transmit power $(P_t)$ & 35 dBm \\ \hline
Noise power ($\sigma^2$) & $-90$ dBm \\ \hline
LEO transmit power ($Q_p$) & $40$ dBm \\ \hline
Pathloss exponent $(\beta)$ & 2.7 \\ \hline
Rician factor ($K_{\text{Rician}}$) & 10 \\ \hline
Max AO iterations & 40 \\ \hline
Max RIS iterations & 80 \\ \hline
Riemannian step size ($\eta$) & 0.2 \\ \hline
Dual ascent step ($\alpha_\lambda$) & $10^{-1}$ \\ \hline
AO tolerance & $10^{-5}$ \\ \hline
Dual bisection tolerance ($\nu_{\text{tol}}$) & $10^{-6}$ \\ \hline
Max bisection iterations & 60 \\ \hline
Interference threshold ($I_{th}$) & $10^{-2}$ W \\ \hline
\end{tabular}
\end{table}

\begin{figure}[!t]
\centering
\includegraphics [width=.48\textwidth]{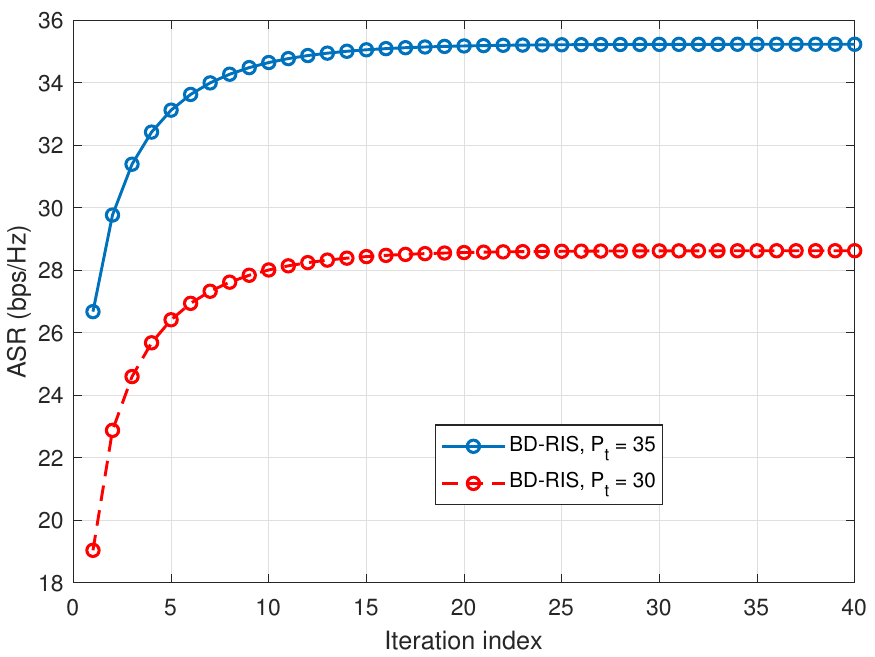}
\caption{Iteration index versus ASR of secondary network, where $I_{th}=0.1$ and $K=4$, and $M=64$ dBm.}
\label{CRNconv2}
\end{figure}
%%%%%%%%%%%%%

%%%%%%%%%%%%%%%%
\begin{figure}[!t]
\centering
\includegraphics [width=.48\textwidth]{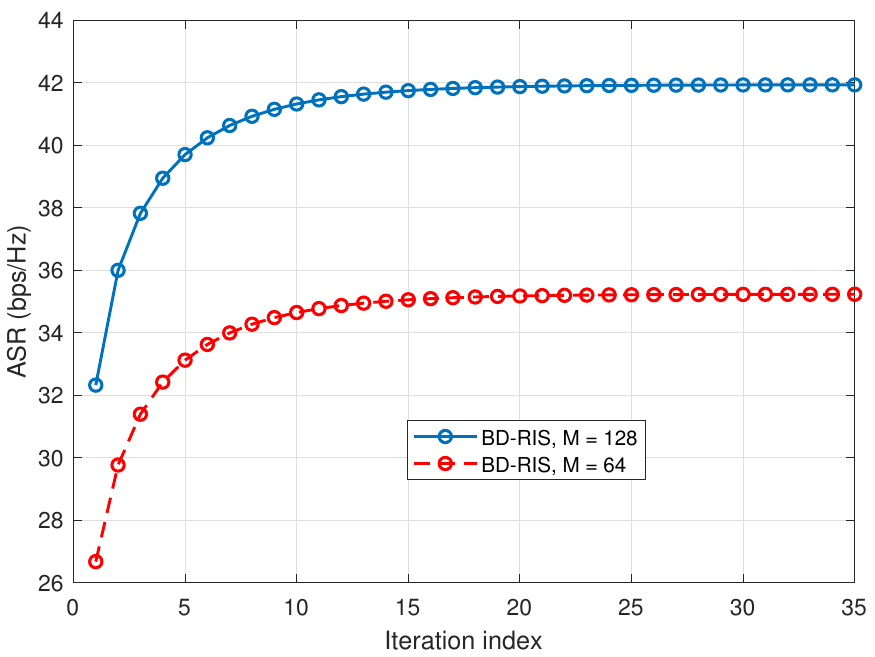}
\caption{Iteration index versus ASR of secondary network, where $I_{th}=0.1$ and $K=4$, and $P_t=35$ dBm.}
\label{CRNconv1}
\end{figure}
%%%%%%%%%%%%%

%%%%%%%%%%%%%%%%
\begin{figure}[!t]
\centering
\includegraphics [width=.48\textwidth]{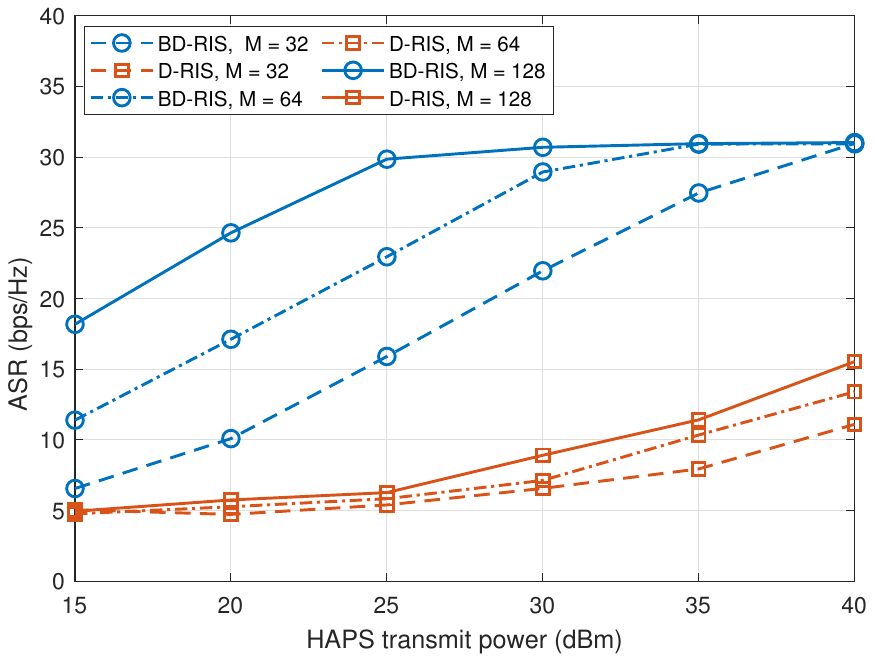}
\caption{HAPS transmit power versus ASR of secondary network for different RIS elements, where $I_{th}=0.01$ and $K=4$.}
\label{CRNratevspower1}
\end{figure}
%%%%%%%%%%%%%
\section{Numerical Results}  \label{sez_4}
In this section, we evaluate the performance of the proposed cognitive radio-enabled multilayer NTN framework via Monte Carlo simulations. 
%Unless otherwise stated, the main simulation parameters are set as follows: the maximum transmit power of the HAPS is $P_t = 35~\mathrm{dBm}$, the transmit power of the primary LEO satellite is $Q_p = 40~\mathrm{dBm}$, the number of reconfigurable elements at the BD-RIS is $M = 64$, the Rician factor is $K_{\mathrm{Rician}} = 10$, and the number of SUTs is $K = 4$. The path-loss exponent is set to $\beta = 2.7$, the number of coherence blocks per simulation run is $T_{\mathrm{blocks}} = 1$, and each result is averaged over $N_{\mathrm{MC}} = 20$ Monte Carlo realizations, unless otherwise specified. The noise variance is fixed at $\sigma^2 = -90~\mathrm{dBm}$, and the interference temperature threshold is set to $I_{\mathrm{th}} = 0.01$. 
Simulation results are presented in terms of spectral efficiency (bits/s/Hz), independent of any specific carrier frequency or bandwidth. Actual throughput for a given system can be directly obtained by multiplying the reported values with the corresponding bandwidth. Each channel realization follows the Rician fading model described in Section~\ref{sez_2}. To demonstrate the performance advantage of the proposed BD-RIS–empowered secondary HAPS network, we compare it with a benchmark system employing a conventional D-RIS of the same size ($M$ elements), under identical system and channel settings. In the simulations, we use normalized Doppler and path-loss parameters without fixing a specific carrier frequency, thereby obtaining spectral efficiency results (bps/Hz). This allows the proposed method to be applied to any carrier frequency and bandwidth, with actual throughput obtained by scaling the spectral efficiency by the chosen bandwidth.

First, we provided the convergence behavior of the proposed alternating optimization algorithm. Fig.~\ref{CRNconv2} and Fig.~\ref{CRNconv1} illustrate the ASR of the secondary network versus the number of iterations for BD-RIS-empowered systems under different parameter settings. In Fig.~\ref{CRNconv2}, the ASR is plotted versus the iteration index for two power budgets of HAPS, i.e., $P_t=30$ and $P_t=35$~dBm, showing that the algorithm converges rapidly within approximately $15$ iterations for low transmission power and 20 iterations for the high transmission power. Similarly, Fig.~\ref{CRNconv1} shows the ASR convergence for different numbers of BD-RIS elements, i.e., $M=64$ and $M=128$, considering the same transmit power. We can see that the large elements of BD-RIS significantly improve the ASR compared to the BD-RIS with a small number of elements due to the higher beamforming gains of the large BD-RIS. In both cases, the results confirm the fast convergence and stability of the proposed algorithm, making it practical for real-time implementations.

The results in Fig.~\ref{CRNratevspower1} illustrate the variation of the secondary network ASR with the HAPS transmit power for different numbers of RIS elements, under the interference threshold $I_{th}=0.01$ and $K=4$ secondary user terminals (SUT). For the proposed BD-RIS--empowered system, the ASR increases significantly with transmit power in the low-to-moderate power region due to the stronger transmission signal gain and the enhanced spatial degrees of freedom provided by larger $M$. However, at higher transmission power levels, the ASR saturates because the interference restriction $I_{th}$ imposed by the primary network becomes the dominant limiting factor, preventing further utilization of the increased power. In contrast, conventional D-RIS achieves substantially lower ASR across all power levels and RIS sizes, with minimal sensitivity to the interference threshold, indicating its inability to fully exploit the available transmit power for secondary transmission. Moreover, for both architectures, increasing the number of reconfigurable elements consistently improves ASR, but the performance gain is much more pronounced for BD-RIS, underscoring its superior capability to leverage additional elements for spectrum-efficient cognitive radio networks.

The results in Fig.~\ref{CRNratevspower2} present the ASR of the secondary network versus the HAPS transmit power for different interference threshold values $I_{th}$, where the number of RIS elements and SUTs are fixed, i.e., $M=64$ and $K=4$, respectively. For the proposed BD-RIS--empowered system, the ASR increases almost linearly in the low-to-moderate power regime when $I_{th}$ is relatively high (e.g., $I_{th}=0.1$), as the secondary transmission can fully exploit the available power. However, for stricter interference constraints (e.g., $I_{th}=0.01$ and $I_{th}=0.001$), the ASR growth rate slows significantly and eventually saturates at high power levels due to the imposed limit on the interference toward the primary network. In contrast, the conventional D-RIS consistently yields much lower ASR across all transmit power levels and shows only marginal sensitivity to $I_{th}$, reflecting its limited ability to adapt to different interference constraints. Overall, the results highlight that BD-RIS can achieve substantially higher spectral efficiency under relaxed interference conditions, while still maintaining an advantage over D-RIS even in highly constrained scenarios.
%%%%%%%%%%%%%%%%
\begin{figure}[!t]
\centering
\includegraphics [width=.48\textwidth]{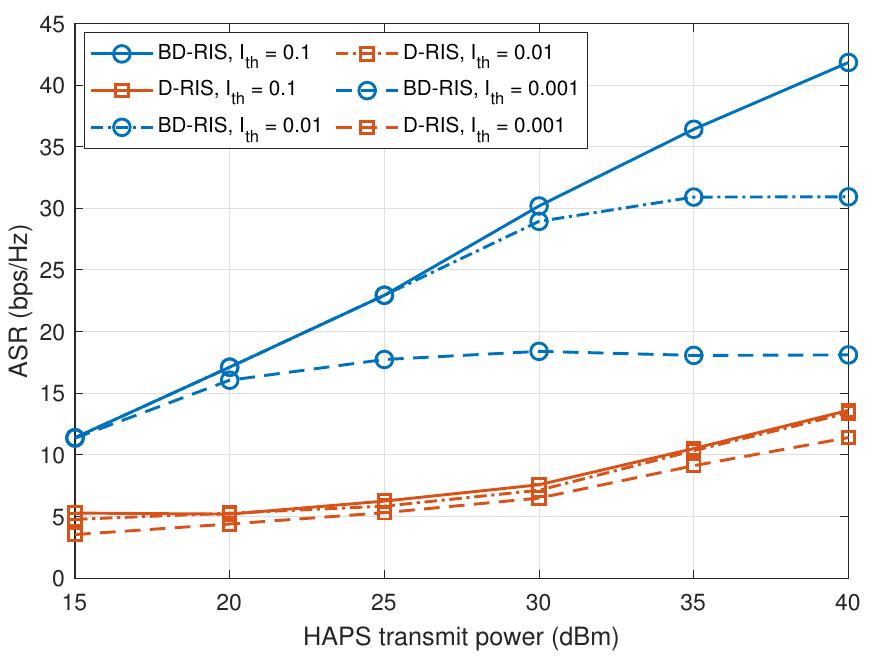}
\caption{HAPS transmit power versus ASR of secondary network for different $I_{th}$ values, where $M=64$ and $K=4$.}
\label{CRNratevspower2}
\end{figure}
%%%%%%%%%%%%%

%%%%%%%%%%%%%%%%
\begin{figure}[!t]
\centering
\includegraphics [width=.48\textwidth]{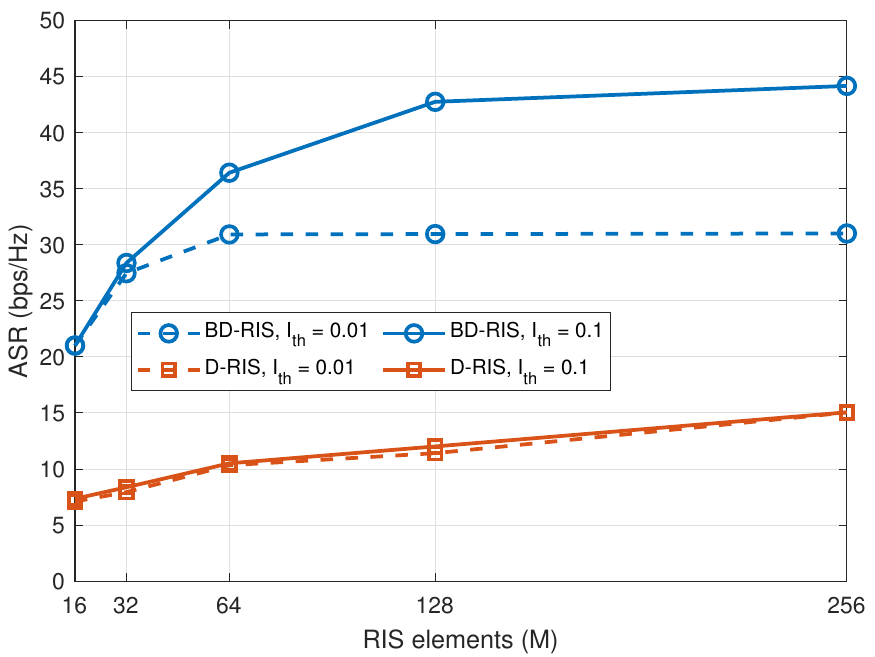}
\caption{RIS elements versus ASR of secondary network, considering $K=4$, and $P_t=35$ dBm.}
\label{CRNratevsRIS3}
\end{figure}
%%%%%%%%%%%%%

%%%%%%%%%%%%%%%%
%\begin{figure}[!t]
%\centering
%\includegraphics [width=.23\textwidth]{CRNratevsRISa.eps}
%\includegraphics [width=.23\textwidth]{CRNratevsRISb.eps}
%\caption{RIS elements versus ASR of secondary network, considering $I_{th}=0.01$ and $P_t=35$ dBm.}
%\label{CRNratevsRIS3}
%\end{figure}
%%%%%%%%%%%%%

The results in Fig.~\ref{CRNratevsusers4} compare the ASR performance of the secondary network for BD-RIS and conventional D-RIS architectures with respect to the varying number of SUTs, where the HAPS transmit power is set to $P_t = 35$~dBm, the interference threshold is $I_{th} = 0.01$, and the number of reconfigurable elements is $M = 64$. As the number of SUTs increases from $2$ to $8$, both systems exhibit an increasing ASR due to the multiuser diversity gain; however, BD-RIS consistently outperforms D-RIS by a substantial margin across all user counts. This gain is attributed to the additional degrees of freedom provided by the BD-RIS structure, allowing a more efficient transmission design and enhanced interference management in the cognitive radio environment. The advantage becomes particularly pronounced at higher $K$ values, where the richer scattering and advanced beamforming capabilities of BD-RIS more effectively exploit spatial resources, underscoring its scalability for dense multiuser scenarios in multilayer NTNs.

%%%%%%%%%%%%%%%%
\begin{figure}[!t]
\centering
\includegraphics [width=.48\textwidth]{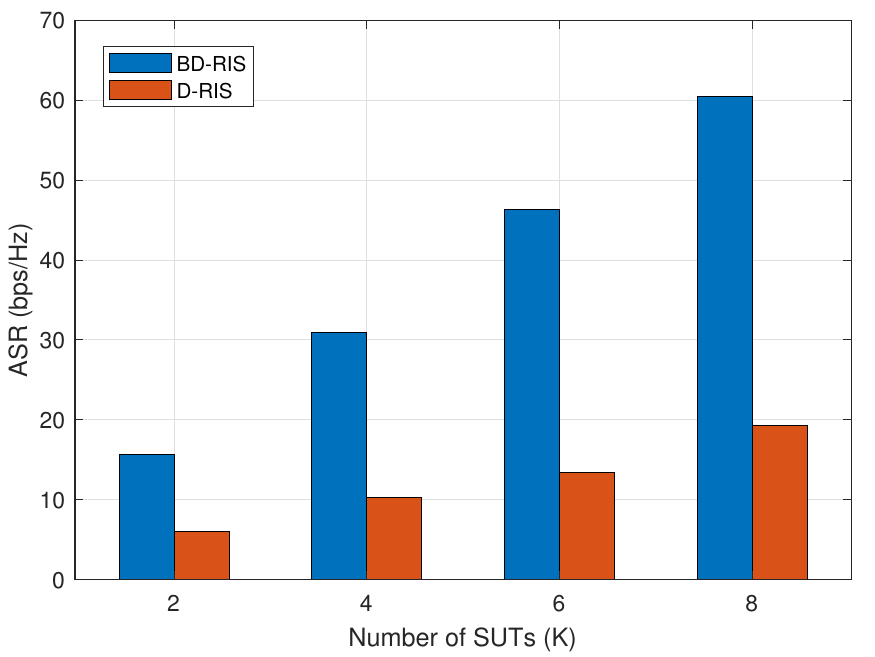}
\caption{SUTs versus ASR of secondary network, considering $I_{th}=0.01$, $M=64$, and $P_t=35$ dBm.}
\label{CRNratevsusers4}
\end{figure}
%%%%%%%%%%%%%

The results in Fig.~\ref{CRNratevsint5} illustrate the ASR performance of the secondary network for BD-RIS and conventional D-RIS systems against the different values of interference temperature threshold $I_{th}$, where the number of reconfigurable elements is set to $M = 64$ and the HAPS transmit power $P_t = 35$~dBm. Two cases with $K = 4$ and $K = 8$ SUTs for both the BD-RIS and D-RIS systems are considered. As $I_{th}$ increases from $10^{-4}$ to $10^{-1}$, the ASR improves for all scenarios due to relaxation of interference restrictions, which allows for greater allocation of transmit power to the secondary network. Across the entire range of $I_{th}$, BD-RIS consistently outperforms D-RIS, with the performance gap widening at higher $K$ values, where the additional degrees of freedom and enhanced beamforming of BD-RIS enable more efficient interference management. This demonstrates the robustness of BD-RIS in cognitive radio network environments, particularly under looser interference constraints and higher user densities in multilayer NTNs.

%%%%%%%%%%%%%%%%
\begin{figure}[!t]
\centering
\includegraphics [width=.48\textwidth]{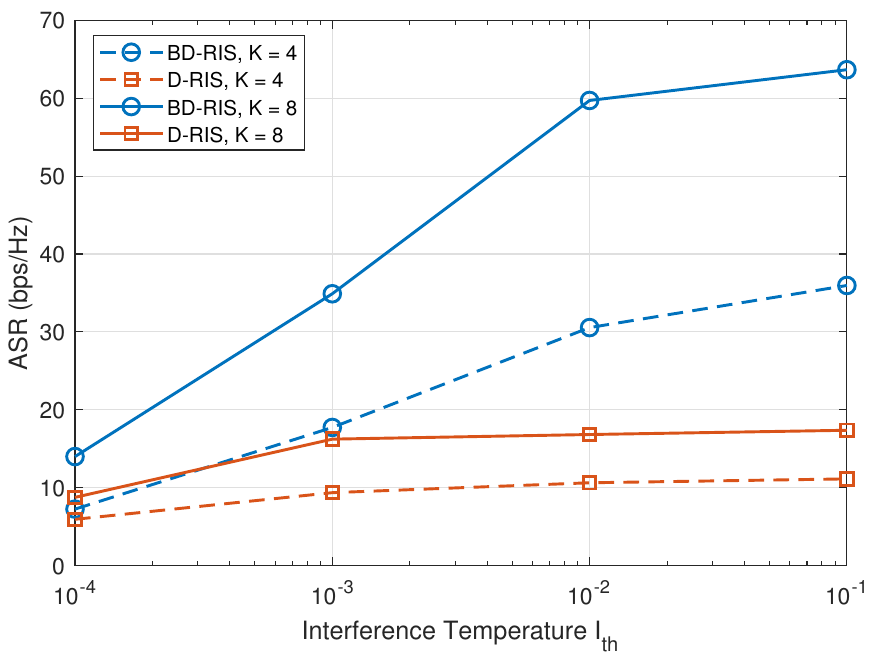}
\caption{Interference temperature threshold versus ASR of secondary network, considering $M=64$, and $P_t=35$ dBm.}
\label{CRNratevsint5}
\end{figure}
%%%%%%%%%%%%%

Fig.~\ref{CRNratevsusers6} shows the ASR performance of the BD-RIS-empowered system versus the number of SUTs, considering different transmission powers of HAPS, i.e. $P_t = 15$, $25$, and $40$~dBm. As expected, the ASR increases monotonically with the number of SUTs as a result of multi-user diversity gains. In addition, a higher transmit power significantly improves the ASR for all values of $K$, with $P_t = 40$~dBm producing the strongest growth. The performance gap between the curves widens with increasing number of SUTs, indicating that the combined effect of more users and higher transmit power leads to substantial throughput improvements in multilayer NTNs powered by BD-RIS.

%%%%%%%%%%%%%%%%
\begin{figure}[!t]
\centering
\includegraphics [width=.48\textwidth]{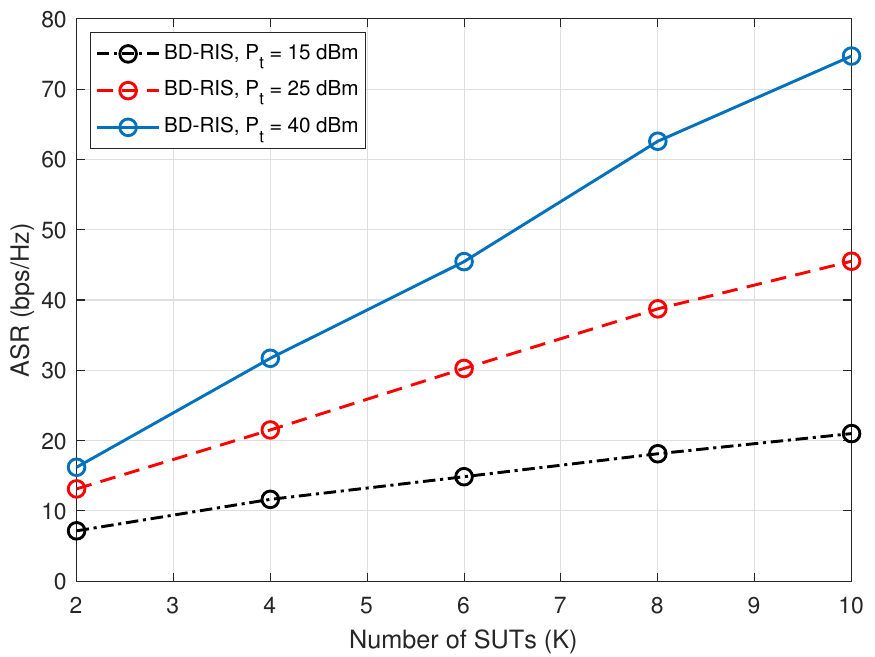}
\caption{SUTs versus ASR of secondary network for different HAPS transmit power, considering $I_{th}=0.1$ and $M=64$.}
\label{CRNratevsusers6}
\end{figure}
%%%%%%%%%%%%%
%%%%%%%%%%%%%%%%
\begin{figure}[!t]
\centering
\includegraphics [width=.5\textwidth]{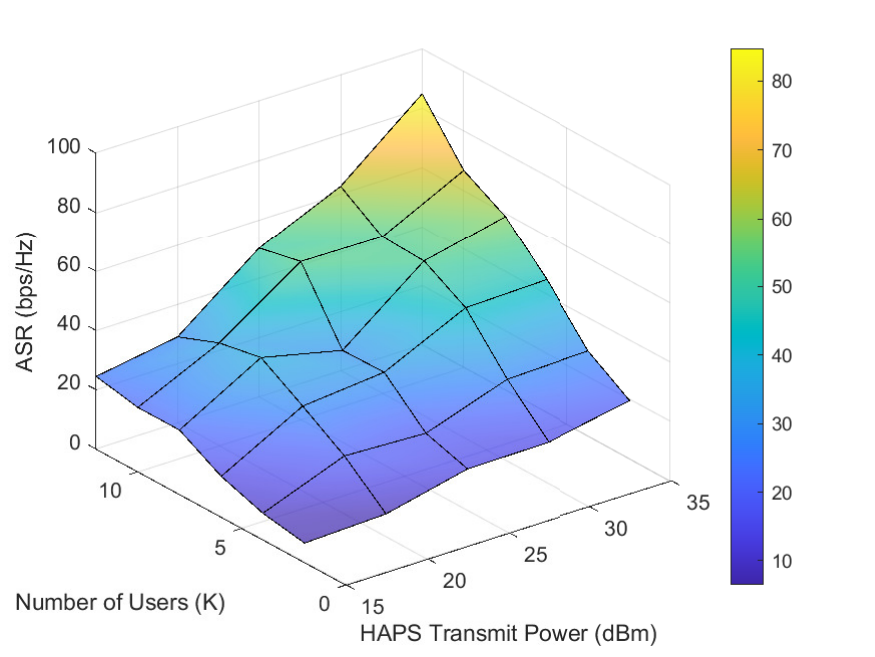}
\caption{ASR of secondary network against varying transmit power and number of SUTs, considering the fixed values of interference temperature and BD-RIS elements, i.e., $I_{th}=0.01$ and $M=64$.}
\label{CRN3d}
\end{figure}
%%%%%%%%%%%%%
Fig. \ref{CRN3d} illustrates the variation of the ASR of the secondary system with respect to the HAPS transmit power and the number of SUTs. As expected, ASR increases with higher transmit power due to the improved received signal strength and enhanced SINR across the SUTs. Similarly, the ASR grows with the number of SUTs, since more terminals benefit from the BD-RIS-assisted transmission, leading to a higher aggregate system throughput. However, the growth is not strictly linear; at lower power levels, the ASR improvement with additional SUTs is limited due to insufficient transmit power to support all SUTs effectively. At higher power levels, the combined effect of more SUTs and stronger transmission significantly boosts ASR, demonstrating the scalability and effectiveness of the BD-RIS in extensive connectivity in NTNs.
\section{Conclusion}  \label{sez_5}
This paper studied a cognitive radio–enabled multilayer NTN framework where a secondary HAPS is equipped with BD-RIS to enhance spectrum sharing with a primary satellite system. We formulated the joint optimization of HAPS transmit power allocation and BD-RIS phase-shift design to maximize the ASR of the secondary network while satisfying interference-temperature and power budget constraints to maximize the spectral efficiency. To tackle this non-convex problem, we proposed an alternating optimization algorithm that exploits a water-filling–like closed-form solution for power allocation and a Riemannian manifold–based update for the BD-RIS configuration. Extensive Monte Carlo simulations demonstrated that the proposed BD-RIS–empowered HAPS system significantly outperforms a benchmark with a conventional D-RIS of the same size, especially in terms of spectral efficiency and interference management under a stringent cognitive radio environment. The results highlight the potential of BD-RIS technology to unlock additional degrees of freedom in non-terrestrial spectrum sharing and deliver substantial performance gains over conventional D-RIS designs.

\section*{Appendix A: Derivation of Closed-Form Optimal Power Allocation for HAPS}

To prove \eqref{p}, we recall the power allocation problem. Note that we consider the problem of maximizing the ASR under both interference and total power constraints in the presence of \( K \) SUTs. The optimization problem for each time interval \( t \) is formulated as:

\begin{align}
 (\mathcal{P}_1)\quad   \max_{\{p_k(t)\}} \quad & \sum_{k=1}^{K} \log_2 \left( 1 + \frac{|\mathbf{h}_k(t) \mathbf{\Phi}(t)|^2 p_k(t)}{\sigma^2 + |f_k(t)|^2 q_n(t)} \right) \label{eq:opt_power_obj} \\
    \text{s.t.} \quad 
    & |\mathbf{g}_n(t) \mathbf{\Phi}(t)|^2 p_k(t) \leq I_{\text{th}}, \quad \forall k, \\
    & \sum_{k=1}^{K} p_k(t) \leq P_{\max}.
\end{align}
The Lagrangian for the optimization problem of $(\mathcal{P}_1)$ can be given by:
\begin{align}
\mathcal{L}(.) 
&= \sum_{k=1}^{K} \log_2\!\Bigg( 1 + \frac{|\mathbf{h}_k(t)\mathbf{\Phi}(t)|^2\,p_k(t)}{\sigma^2 + |f_k(t)|^2 q_n(t)} \Bigg) \nonumber\\
&\quad - \sum_{k=1}^{K} \lambda_k(t) \Big( |\mathbf{g}_n(t)\mathbf{\Phi}(t)|^2 p_k(t) - I_{\text{th}} \Big) \nonumber\\
&\quad - \nu(t) \Big( \sum_{k=1}^{K} p_k(t) - P_t \Big),
\label{eq:Lagrangian_full_t}
\end{align}
where \(\lambda_k(t)\) and \(\nu(t)\) are the Lagrange multipliers corresponding to the interference and total power constraints, respectively. Now we compute the partial derivative w.r.t. $p_k(t)$. First, we drive the first term (log term) in \eqref{eq:Lagrangian_full_t} as:
\begin{align}
\frac{\partial}{\partial p_k(t)}\log_2 \Big( 1 + \frac{|\mathbf{h}_k(t)\mathbf{\Phi}(t)|^2 p_k(t)}{\sigma^2 + |f_k(t)|^2 q_n(t)} \Big),
\end{align}
where the resultant expression after derivation is given as:
\begin{align}
=\frac{1}{\ln_2}\cdot \frac{|\mathbf{h}_k(t)\mathbf{\Phi}(t)|^2}{\left( \sigma^2 + |f_k(t)|^2 q_n(t) + |\mathbf{h}_k(t)\mathbf{\Phi}(t)|^2 p_k(t) \right)}.
\end{align}
Next, we drive the second term (interference constraint) in \eqref{eq:Lagrangian_full_t} as: 
\begin{align}
\frac{\partial}{\partial p_k(t)} \left[
- \lambda_k(t) ( |\mathbf{g}_n(t)\mathbf{\Phi}(t)|^2 p_k(t))\right].
\end{align}
After derivation, it can be written as:
\begin{align}
=- \lambda_k(t) ( |\mathbf{g}_n(t)\mathbf{\Phi}(t)|^2.
\end{align}
Subsequently, the last term (power constraint) in \eqref{eq:Lagrangian_full_t} can be derived as:
\begin{align}
\frac{\partial}{\partial p_k(t)} \left[
- \nu(t) \sum_{k=1}^{K} p_k(t)\right] = -\nu(t).
\end{align}
Combining all the terms, we can describe the following:
\begin{align}
 \frac{\partial}{\partial p_k(t)} &= \frac{|\mathbf{h}_k(t)\mathbf{\Phi}(t)|^2}{\ln 2 \left( \sigma^2 + |f_k(t)|^2 q_n(t) + |\mathbf{h}_k(t)\mathbf{\Phi}(t)|^2 p_k(t) \right)}\nonumber\\& 
- \lambda_k(t) |\mathbf{g}_n(t)\mathbf{\Phi}(t)|^2 - \nu(t).
\label{57}
\end{align}
Setting the derivative to zero gives the optimality condition as follows:
\begin{align}
&\frac{|\mathbf{h}_k(t)\mathbf{\Phi}(t)|^2}{\ln 2 \left( \sigma^2 + |f_k(t)|^2 q_n(t) + |\mathbf{h}_k(t)\mathbf{\Phi}(t)|^2 p_k(t) \right)} 
\nonumber\\ &= \lambda_k(t) |\mathbf{g}_n(t)\mathbf{\Phi}(t)|^2 + \nu(t).
\end{align}
Now, solving for $p_k(t)$, it can be stated as:
\begin{align}
&\sigma^2 + |f_k(t)|^2 q_n(t) + |\mathbf{h}_k(t)\mathbf{\Phi}(t)|^2 p_k(t) \nonumber\\
&= \frac{|\mathbf{h}_k(t)\mathbf{\Phi}(t)|^2}{\ln 2 \left( \lambda_k(t) |\mathbf{g}_n(t)\mathbf{\Phi}(t)|^2 + \nu(t) \right)}.
\end{align}
After simplification, the value of $p^*_k(t)$ can be computed as:
\begin{align}
p_k(t) 
&= \frac{1}{\ln 2 \left( \lambda_k(t) |\mathbf{g}_n(t)\mathbf{\Phi}(t)|^2 + \nu(t) \right)} \nonumber\\&
- \frac{\sigma^2 + |f_k(t)|^2 q_n(t)}{|\mathbf{h}_k(t)\mathbf{\Phi}(t)|^2}.\label{60}
\end{align}
Finally, enforcing non-negativity, \eqref{60} can be re-expressed as:
\begin{equation}  
    p_k^*(t) = \max \Bigg\{ 0,\frac{1}{\ln(2)\left( \lambda_k(t) c_k(t) + \nu(t) \right)} - \frac{b_k(t)}{a_k(t)} \Bigg\}, \label{61}
\end{equation}
where \( a_k(t) \), \( b_k(t) \), and \( c_k(t) \) in \eqref{61} are defined as:
\begin{equation}
a_k(t) = |\mathbf{h}_k(t) \mathbf{\Phi}(t)|^2.
\end{equation}
\begin{equation}
b_k(t) = \sigma^2 + |f_k(t)|^2 q_n(t).
\end{equation}
\begin{equation}
c_k(t) = |\mathbf{g}_n(t)\mathbf{\Phi}(t)|^2.
\end{equation}
Thus, \eqref{p} is proved.

\bibliographystyle{IEEEtran}% This is IEEEtran.bst file
\bibliography{Wali_EE}

\end{document}